\documentclass[a4paper]{article}
\usepackage{graphicx} 
\usepackage{amsmath, amsthm, amsfonts, amssymb, amscd, bm, enumerate}
\usepackage{verbatim}

\usepackage{url}

\newtheorem{theorem}{Theorem}

\newtheorem{lemma}{Lemma}

\newtheorem{assumption}{Assumption}

\theoremstyle{remark}
\newtheorem{remark}{Remark}
\newtheorem{example}{Example}

\def\F{\mathcal{F}}

\def\X{\mathcal{X}}
\def\Y{\mathcal{Y}}

\def\bR{\mathbb{R}}

\def\bone{\mathbf{1}}
\def\bP{\mathbb{P}}
\def\bQ{\mathbb{Q}}
\def\bD{\mathbb{D}}
\def\blambda{\boldsymbol\lambda}

\def\diag{\mathrm{diag}}
\mathchardef\mhyphen="2D

\begin{document}
\title{Filters and smoothers for self-exciting Markov modulated counting processes.}
\author{Samuel N. Cohen\thanks{Samuel Cohen acknowledges the research support of the Oxford--Man Institute for Quanititative Finance, in particular with the provision of NYSE Euronext data.} and Robert J. Elliott}
\date{\today}

\maketitle

\begin{abstract}
We consider a self-exciting counting process, the parameters of which depend on a hidden finite-state Markov chain. We derive the optimal filter and smoother for the hidden chain based on observation of the jump process. This filter is in closed form and is finite dimensional. We demonstrate the performance of this filter both with simulated data, and by analysing the `flash crash' of 6th May 2010 in this framework.

Keywords: Nonlinear filter, Markov chain, Hawkes' process, High Frequency Trading, Flash crash

MSC 2010:62M05, 60G55, 60J28, 91G70
\end{abstract}

\section{Introduction}
In many situations, one wishes to infer properties of a hidden state process from noisy observations. In this paper, we focus on a praticular scenario, where the observation process is a counting process (that is, an integer valued jump process which increases by at most one at any time), but one which has the ability to `self-excite', that is, where the rate of jumps can depend on the jumps in the past. We assume that the precise relationship between the past jumps and the current rate is modulated by another process, which is unobservable, but is a finite-state continuous time Markov chain. Our challenge is to detect the state of this underlying process from our observations.

Problems similar to this are common in many settings. Self-exciting processes have been given increasing importance in recent years as models involved in high-frequency trading on financial markets. Such models can be found in \cite{Engle1998, Koopman08, Bacry11, Cartea2011, Cont11, Chen13} among others, and \cite{Bauwens09} provides a survey. Other examples of applications of self-exciting processes include earthquake occurrence \cite{Ogata88, VereJones95}, neuron firing \cite{Chornoboy88} and criminal activity \cite{Mohler11}. 

On the other hand, hidden Markov models have been used to model a wide variety of economic, financial and industrial phenomena. Examples of this form the focus of the book \cite{Elliott1994a}. The filters for these processes, using the reference probability due to Zakai, were derived in \cite{Elliott1993}, for the case where the observation process is an Ito diffusion. Closer to our situation here is \cite{Elliott2005a}, for the case when observations follow a Poisson process with rate dependent on the underlying state. A key advantage of these methods is that, due to the assumption that the underlying process is a finite state Markov chain, the estimate of the underlying state is obtainable in a closed form, up to the solution of a deterministic ODE. This gives significant computational advantages, as these filtering equations are extremely fast to calculate.

Combining filtering techniques with high-frequency financial data is a well established principle. Examples of this, using various models, can be found in \cite{Bolland97, Frey01, Platania04} and many others. In particular Frey and Runggaldier \cite{Frey01} have a very similar situation to ours, with a self-exciting Markov-modulated jump process, however their filters are different (they mention but do not explore the case when $X$ is a finite state Markov chain), and they do not consider application of these methods to any specific models or datasets.

The paper is structured as follows. In Section \ref{sec:model}, we introduce the general model under consideration, and also the specific case related to the Hawkes' process. In Section \ref{sec:filter}, extending the Markov-modulated Poisson observation case of \cite{Elliott2005a} we derive the filter and smoother equations, along with simple numerical approximations. We also derive a robust version of the filter, and indicate why this is of limited utility in this context.  In Section \ref{sec:estimation} we discuss the estimation of parameters in this context, including showing why the commonly suggested EM algorithm fails to estimate the transition matrix of the underlying chain. In Section \ref{sec:simulated} we apply these methods to simulated data, and in Section \ref{sec:flash} we apply these methods to TAQ data surrounding the `flash crash' of 6th May 2010.

\section{A self-exciting hidden Markov model}\label{sec:model}
We assume that we have the following situation. Let $X$ be an $N$-state Markov chain, with rate matrix\footnote{Depending on convention, this is either the rate matrix or the transpose of the rate matrix. In our setting, the element $[A_t]{ij}$ is the rate of jumping from state $j$ to state $i$ at time $t$, and so $A^{\top}_t$ is the infinitesimal generator of the process.} $A_t$. We assume, without loss of generality, that $X$ takes values in the standard basis vectors in $\bR^N$, which we denote $\X$ for convenience. Then, as in \cite{Elliott1993}, $X$ has the representation
\[X_t= X_0+ \int_{]0,t]} A_t X_t dt + M_t\]
for $M$ an $\bR^N$-valued $\bP$-martingale. 

Let $Y$ be a counting process, that is, a right-constant integer valued increasing process with jumps of at most $1$. Write $\bD$ for the space of paths of $Y$. Define the filtrations $\{\Y_t\}_{t\geq 0}$ and $\{\F_t\}_{t\geq 0}$, where
\[\Y_t =\sigma(Y_s; s\leq t), \qquad  \F_t =\sigma(Y_s, X_s; s\leq t) = \Y_t\vee \sigma(X_s; s\leq t).\]
Let $\mathcal{P}$ denote the predictable $\sigma$-algebra in the $\{\Y_t\}_{t\geq 0}$ filtration.

We suppose that we have a probability measure $\bP$ such that $Y$ is a Cox process with  $\bP$-compensator (in the $\{\F_t\}_{t\geq 0}$ filtration)
\[\lambda:\X\times[0,\infty[\times\bD\to ]0,\infty[\]
which is $B(\X)\times \mathcal{P}$ measurable and such that $s\mapsto \lambda(X_{s-},s, Y_{(\cdot)})$ is left continuous. That is,
\[Y_t - \int_{]0,t]} \lambda( X_{s-},s, Y_{(\cdot)}) ds\] 
is a $(\bP, \{\F_t\}_{t\geq0})$-martingale. The value of $\lambda(X_{s-}, s, Y_{(\cdot)})$ is the rate of jumps of $Y$ at time $t$.  Note that only the `state a moment ago' $X_{t-}$ appears in the rate function, but the previously observed process $Y$ can affect the rate in a general way (however only past observations can have an effect, due to the predictability assumption).

\begin{remark}
 As $Y$ appears in the intensity, it is difficult to show whether $\lambda$ is integrable and $Y_t$ is finite-valued for all $t$ with probability one (that is, we have a density for $Y$ of an appropriate class). For $\lambda$ of the special form
\[\lambda(t, Y_{(\cdot)}) = \phi\Big(\int_{]-\infty,t]} h(t-s) dY_s\Big)\]
appropriate conditions for stability are given in \cite{Bremaud1996}. Rather than concern ourselves with this technical problem, we will simply assume that we have a model which satisfies these properties. A formal condition is given below (Assumption \ref{assn:Zmart}).
\end{remark}

\begin{example}
 A key example of interest is when 
\[\lambda(X_{t-},t, Y_{(\cdot)}) = \langle \alpha, X_{t-}\rangle + \langle \beta, X_{t-}\rangle \int_{]0, t[} e^{-\langle \gamma, X_{t-}\rangle (t-s)} dY_s\]
where $\alpha, \beta, \gamma$ are known vectors with nonnegative entries. This is a natural variant of the Hawkes' process mentioned before, with parameters determined by the current state of the Markov chain. Note that the superficially similar situation
\[\lambda = \langle \alpha, X_{t-}\rangle + \int_{]0, t[} \langle \beta, X_s\rangle e^{-\langle \gamma, X_s\rangle (t-s)} dY_s\]
does not fall into our class of models, as past values of the state $X_s$ have an effect on the current rate of jumps. This restriction is needed to ensure that a finite-dimensional filter is obtained.
\end{example}

\section{Filter and Smoother equations}\label{sec:filter}
Our challenge is, given observations of $Y$, to determine the state $X$. In particular, we wish to be able to evaluate $E[f(X_t)|\Y_T]$ for any function $f:\X\to\bR$ and any times $t$ and $T$. As $\X$ is the space of basis vectors, it is easy to see that any function $f:\X\to\bR$ can be written $f(X_t) = \langle \mathbf{f}, X_t\rangle$ for some vector $\mathbf{f}\in\bR^N$, and so $E[f(X_t)|\Y_T] = \langle \mathbf{f}, E[X_t|\Y_T]\rangle$. For this reason, it is enough for us to determine $E[X_t|\Y_T]$, for all times $t,T$. 

For $T<t$, as $X$ is a Markov chain, in the time-homogenous case we have
\[E[X_t|\Y_T] = E[E[X_t|\Y_T\vee X_T]|\Y_T] =E[e^{A(t-T)} X_T|\Y_T] = e^{A(t-T)} E[X_T|\Y_T]\]
so it is sufficient to consider the case $T\geq t$.  If $A$ is not time homogenous, this equation would be slightly different (the appropriate transition matrix would be used in place of $e^{A(t-T)}$), but the same simplification is possible.

\subsection{Filter equation ($T=t$)}

We first seek to determine an equation for $E[X_t|\Y_t]$, we do this in a similar way to \cite{Elliott2005a}. Write $\lambda_u=\lambda(u, X_{u-} Y_{(\cdot)})$ for notational simplicity. Let $\bQ$ be the measure under which $Y$ is a standard Poisson process, independent of $X$,  and $X$ has dynamics as above.
Let $Z$ be the solution to the equation
\begin{align*}
Z_t &= 1+ \int_{]0,t]} Z_{u-} (\lambda_{u}-1)(dY_u-du)\\
&= \exp\Big(-\int_{]0,t]}(\lambda_{u}-1)du\Big) \Big(\prod_{u\in]0,t]} \lambda_u^{\Delta Y_u}\Big).
\end{align*}
Note that as $Y_t-t$ defines a $\bQ$-martingale, we see that $Z$ is a nonnegative $\bQ$-local martingale with $Z_0=1$. 
\begin{assumption}\label{assn:Zmart}
 $Z$ is a true martingale on $[0,T]$.
\end{assumption}
A sufficient condition under which this is true is given by the generalized Novikov condition
\[E\Big[\exp\Big(\int_{]0,T]}(\lambda_{u}-1)du\Big)\Big]<\infty.\]

\begin{lemma}
 If $Z$ is a true martingale and $\bQ$ is the measure under which $Y$ is a standard Poisson process and $X$ is a Markov chain with rate matrix $A_t$ at time $t$, then under the measure $\tilde\bP$ defined by 
\[\frac{d\tilde\bP}{d\bQ}\Big|_{\F_t} = Z_t,\]
$Y$ has compensator $\int_{]0,t]}\lambda_udu$ and $X$ is a Markov chain with rate matrix as before. Therefore, $\tilde\bP=\bP$.
\end{lemma}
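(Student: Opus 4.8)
The plan is to run the standard change-of-measure (Girsanov) argument for point processes, combined with the martingale characterisation of a finite-state Markov chain, and then conclude by uniqueness of the joint law. Since $Z$ is a true martingale on $[0,T]$ with $Z_0=1$, the prescription $d\tilde\bP/d\bQ|_{\F_t}=Z_t$ is consistent in $t$ and defines a bona fide probability measure $\tilde\bP$; moreover, because $Z_0=1$, the $\F_0$-law of $X_0$ is the same under $\tilde\bP$ as under $\bQ$, hence the same as under $\bP$. It then remains to verify two dynamic properties under $\tilde\bP$: the form of the $\{\F_t\}$-compensator of $Y$, and the dynamics of $X$.

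For the compensator of $Y$, set $B_t=Y_t-\int_{]0,t]}\lambda_u\,du$ and apply the integration-by-parts formula to $Z_tB_t$, using $dZ_t=Z_{t-}(\lambda_t-1)(dY_t-dt)$ together with $[Z,B]_t=\int_{]0,t]}Z_{u-}(\lambda_u-1)\,dY_u$ (the only contribution to the covariation comes from the jumps of $Y$). A short computation shows that the finite-variation terms recombine, so that $Z_tB_t$ becomes a stochastic integral against the compensated jump martingale $dY_t-dt$, hence a $(\bQ,\{\F_t\})$-local martingale. By the Bayes formula for conditional expectation under an absolutely continuous change of measure, a process $V$ with $V_0=0$ is a $\tilde\bP$-martingale if and only if $ZV$ is a $\bQ$-martingale; so once the local martingale $ZB$ is shown to be a true martingale on $[0,T]$, we obtain that $Y_t-\int_{]0,t]}\lambda_u\,du$ is a $(\tilde\bP,\{\F_t\})$-martingale. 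This is exactly the statement that $Y$ has $\tilde\bP$-compensator $\int_{]0,t]}\lambda_u\,du$, and the $B(\X)\times\mathcal{P}$-measurability and left-continuity properties of $\lambda$ carry over unchanged from the definition.

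For the dynamics of $X$, recall the representation $X_t=X_0+\int_{]0,t]}A_uX_u\,du+M_t$ with $M$ a $\bQ$-martingale, and show that $Z_tM_t$ is a $(\bQ,\{\F_t\})$-local martingale. The key observation is that under $\bQ$ the Poisson process $Y$ is independent of $X$, so $Y$ and $X$ almost surely share no jump time; since the jumps of $Z$ occur only at jump times of $Y$, the covariation $[Z,M]$ vanishes, and integration by parts gives $d(Z_tM_t)=Z_{t-}\,dM_t+M_{t-}\,dZ_t$, a sum of two stochastic integrals against $\bQ$-(local)martingales. After the same upgrade from local to true martingale on $[0,T]$, $M$ is therefore also a $(\tilde\bP,\{\F_t\})$-martingale, so $X$ retains the representation $X_t=X_0+\int_{]0,t]}A_uX_u\,du+M_t$ with $M$ a $\tilde\bP$-martingale; as $X$ is an $\X$-valued pure-jump process, this is precisely the assertion that, under $\tilde\bP$, $X$ is a Markov chain with rate matrix $A_t$.

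Combining these, under $\tilde\bP$ the pair $(X,Y)$ has the same $\F_0$-law, the same $\{\F_t\}$-compensator for $Y$, and the same $\{\F_t\}$-dynamics for $X$ as under $\bP$; since these data determine the joint law of $(X,Y)$ (well-posedness of the corresponding martingale problem), we conclude $\tilde\bP=\bP$ on $\F_T$. I expect the main obstacle to be the rigorous passage from local to true martingales for $Z_tB_t$ and $Z_tM_t$ on $[0,T]$, that is, extracting the required uniform integrability from Assumption \ref{assn:Zmart} (typically via a localising sequence along which $\lambda$, $Z$ and $M$ are controlled); the integration-by-parts computations themselves are routine. A secondary delicate point is the appeal to uniqueness of the joint law, which is where the predictability and left-continuity hypotheses on $\lambda$ (and the standing assumption that $Y$ is finite-valued for all $t$) are genuinely used.
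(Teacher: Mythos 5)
Your proposal is correct and follows essentially the same route as the paper: apply the Itô product rule to $Z$ times the candidate martingales, identify the $\tilde\bP$-compensators of $Y$ and $X$ via the Bayes/Girsanov correspondence, and conclude $\tilde\bP=\bP$ because these data determine the joint law on the path space of $(X,Y)$. You are somewhat more explicit than the paper about the vanishing covariation $[Z,M]$, the local-to-true martingale upgrade, and the uniqueness step, but these are refinements of the same argument rather than a different one.
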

\begin{proof}
From Ito's formula we see that
\[\frac{d(Z_tY_t)}{Z_{t-}} = Y_t(\lambda_{t}-1)(dY_t-dt) + \lambda_{u}dY_t\]
and so 
\[d(Z_tY_t)= Z_{t-}\lambda_t dt+(\bQ\text{-martingale increment}),\]
which implies that $Y$ has the compensator $\int_{]0,t]}\lambda_udu$ under $\tilde\bP$, as desired. Similarly
\[Z_t X_t = X_0 + \int_{]0,t]} Z_{u-} X_{u-} (\lambda_{u}-1) (dY_u -du) + \int_{]0,t]} Z_{u-} A_uX_u du + \int_{]0,t]} Z_{u-} dM_u.\]
Therefore $X$ has $\{\F_t\}_{t\geq0}$-compensator $\int_{]0,t]}A_uX_udu$ under $\tilde\bP$, and so is a Markov chain with rate matrix $A_u$. As our space is defined by the paths of $Y$ and $X$, this implies $\tilde\bP=\bP$.
\end{proof}

We wish to determine a recursive relation for
\[E[X_t|\Y_t] = \frac{E_\bQ[Z_tX_t|\Y_t]}{E_\bQ[Z_t|\Y_t]}.\]
Using the Zakai reference probability approach, we will do this by finding the unnormalized density
\[q_t = E_\bQ[Z_tX_t|\Y_t],\]
as $E_\bQ[Z_t|\Y_t]= \langle q_t, \bone\rangle \in \bR$, this will allow us to calculate our expectations. 

\begin{theorem}
 The unnormalized `filtered' density process $q$ satisfies the $\{\Y_t\}_{t\geq0}$-adapted stochastic vector ODE 
\[q_t = q_0 + \int_{]0,t]}(\Lambda_u-I)q_{u-} (dY_u -du)+ \int_{]0,t]} A_uq_{u} du.\]
\end{theorem}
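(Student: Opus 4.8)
The plan is to use the reference‑measure (Zakai) technique: obtain a semimartingale decomposition of $Z_tX_t$ in the $\{\F_t\}$‑filtration under $\bQ$, and then project term‑by‑term onto $\{\Y_t\}$ using $q_t=E_\bQ[Z_tX_t\,|\,\Y_t]$. The starting point is the computation already performed in the proof of the Lemma: because under $\bQ$ the Poisson process $Y$ and the chain $X$ are independent they have no common jump times almost surely, so $[Z,X]\equiv 0$ and the product rule gives
\[Z_tX_t = X_0 + \int_{]0,t]} Z_{u-}X_{u-}(\lambda_u-1)(dY_u-du) + \int_{]0,t]} Z_{u-}A_uX_u\,du + \int_{]0,t]} Z_{u-}\,dM_u.\]
The key algebraic manoeuvre is then to replace the scalar rate $\lambda_u=\lambda(X_{u-},u,Y_{(\cdot)})$, which is \emph{not} $\{\Y_t\}$‑adapted, by the diagonal matrix $\Lambda_u$ with $[\Lambda_u]_{ii}=\lambda(e_i,u,Y_{(\cdot)})$, $e_i$ the $i$th element of $\X$, which \emph{is} $\{\Y_t\}$‑predictable: since $X_{u-}\in\X$ selects a single coordinate, $\lambda_u X_{u-}=\Lambda_u X_{u-}$, hence $(\lambda_u-1)X_{u-}=(\Lambda_u-I)X_{u-}$.

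Next I would take $E_\bQ[\,\cdot\,|\Y_t]$ of both sides. The $X_0$ term gives $E_\bQ[X_0]=q_0$. For the finite‑variation term, conditional Fubini yields $E_\bQ\big[\int_{]0,t]}A_uZ_{u-}X_u\,du\,\big|\,\Y_t\big]=\int_{]0,t]}A_u\,E_\bQ[Z_{u-}X_u\,|\,\Y_t]\,du$; and because $Y$ has independent increments under $\bQ$ the increments of $Y$ after time $u$ are independent of $\F_u$, so $E_\bQ[Z_{u-}X_u\,|\,\Y_t]=E_\bQ[Z_{u-}X_u\,|\,\Y_u]=q_u$ for a.e.\ $u$, giving $\int_{]0,t]}A_uq_u\,du$. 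For the $(dY_u-du)$ term, the $\{\Y_t\}$‑predictable factor $\Lambda_u-I$ may be kept outside the conditional expectation, and interchanging $E_\bQ[\,\cdot\,|\Y_t]$ with the stochastic integral against the $\bQ$‑martingale $Y_u-u$ (justified, again, by independence of the future increments of $Y$ from $\F_u$) leaves the $\{\Y_t\}$‑predictable projection of $Z_{u-}X_{u-}$, namely $q_{u-}$, producing $\int_{]0,t]}(\Lambda_u-I)q_{u-}(dY_u-du)$. Finally the martingale term disappears under the projection: $M$ has no common jumps with $Y$ and $X$ is $\bQ$‑independent of $Y$, so $\int_{]0,t]}Z_{u-}\,dM_u$ is orthogonal to the $\{\Y_t\}$‑filtration and its $\{\Y_t\}$‑optional projection is identically $0$. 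Collecting the surviving terms yields the claimed equation, and $\{\Y_t\}$‑adaptedness of $q$ is immediate from its definition.

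The main obstacle is making the two interchange steps rigorous: pushing the $\{\Y_t\}$‑conditional expectation through the stochastic integral $\int_{]0,t]}(\Lambda_u-I)Z_{u-}X_{u-}(dY_u-du)$, and showing $E_\bQ\big[\int_{]0,t]}Z_{u-}\,dM_u\,\big|\,\Y_t\big]=0$. Both rest on the structural facts that under $\bQ$ the observation $Y$ is a standard Poisson process independent of $X$ (so its increments after $u$ are independent of $\F_u$) and that $Y$ and $M$ have disjoint jumps; the integrability required to apply these — in particular that $Z_tX_t\in L^1(\bQ)$ and that the stochastic integrals are genuine, not merely local, martingales — is precisely what Assumption \ref{assn:Zmart} together with the left‑continuity of $\lambda$ is there to supply. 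Once these projections are legitimate, the rest of the argument is routine bookkeeping.
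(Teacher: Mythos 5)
Your proposal is correct and follows essentially the same route as the paper: start from the $\bQ$-semimartingale decomposition of $Z_tX_t$ obtained in the change-of-measure lemma, use $\lambda_uX_{u-}=\Lambda_uX_{u-}$ to replace the non-$\{\Y_t\}$-adapted scalar rate by the $\{\Y_t\}$-predictable diagonal matrix $\Lambda_u$, and project each term onto $\Y_t$ using the independent increments of $Y$ under $\bQ$. The only difference is that you spell out the justification for interchanging the conditional expectation with the stochastic integrals and for the vanishing of the $\int Z_{u-}\,dM_u$ term, which the paper leaves implicit.
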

\begin{proof}
We know that
\[Z_t X_t = X_0 + \int_{]0,t} Z_{u-} X_{u-} (\lambda_{u}-1) (dY_u -du) + \int_{]0,t]} Z_{u-} A_uX_u du + \int_{]0,t]} Z_{u-} dM_u.\]
Then, as $Y$ has independent increments under $\bQ$, taking a conditional expectation ,
\[q_t = q_0 + \int_{]0,t]}E_{\bQ}[Z_{u-} X_{u-} (\lambda_{u}-1)|\Y_u] (dY_u -du)+ \int_{]0,t]} A_uq_{u} du.\]
As $\lambda$ is a scalar function of the state $X_{u-}$, we can write
\[X_{u-} \lambda_{u} =  \lambda(u, X_{u-}, Y_{(\cdot)})X_{u-} = \Lambda_u X_{u-}\]
where $\Lambda$ is the diagonal matrix with entries $(\Lambda_u)_{ii} = \lambda(u, e_i, Y_{(\cdot)})$, and so, crucially, $\Lambda_u$ is $\Y_u$ measurable. Substituting this in our equation for $q$, we find
\[q_t = q_0 + \int_{]0,t]}(\Lambda_u-I)q_{u-} (dY_u -du)+ \int_{]0,t]} A_uq_{u} du.\]
\end{proof}
 Due to the discrete nature of the jumps of $Y$, this stochastic system can often be well approximated using classical ODE techniques.

\subsection{Smoother equation ($T>t$)}
We now wish to find a formula for $r_t:=E_{\bQ}[Z_TX_t|\Y_T]\propto E[X_t|\Y_T]$, when $T>t$. 

\begin{theorem}
 For every $t\geq 0$ there exists a $\Y_T$-measurable vector $v_t$ such that the `smoothed' density $r_t$ satisfies
\[E_{\bQ}[Z_TX_t|\Y_T]= r_t = \diag(v_t) q_t.\]
Furthermore, as a function of $t$, $v$ satisfies the equation
\[dv_t = -(A^\top_t - \Lambda_t +I) v_{t-} dt + (\Lambda_t^{-1}-I)v_{t-} dY_t; \qquad v_T = \bone.\]
\end{theorem}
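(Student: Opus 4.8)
The plan is as follows. Rather than construct $v_t$ abstractly, I would \emph{define} $v_t$, $t\in[0,T]$, to be the pathwise, backward-in-time solution of the stated linear equation $dv_t = -(A^\top_t-\Lambda_t+I)v_{t-}\,dt + (\Lambda_t^{-1}-I)v_{t-}\,dY_t$ with $v_T=\bone$, and then prove that $r_t=\diag(v_t)q_t$; this gives existence and the claimed dynamics at once. Well-posedness is immediate pathwise: between consecutive jumps of $Y$ one solves a linear ODE backwards, and at a jump of $Y$ at time $\tau$ one multiplies (going backwards) by $\Lambda_\tau$, which is admissible because $\lambda>0$ makes $\Lambda_\tau$ invertible. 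Since $v_t$ is assembled from $\{A_u,\Lambda_u:u\in[t,T]\}$ and the increments of $Y$ on $[t,T]$, and each $\Lambda_u$ is $\Y_u$-measurable, $v_t$ is $\Y_T$-measurable, as required.

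The core of the argument is an identification at the level of the enlarged filtration $\G_t:=\F_t\vee\Y_T$. Put $N_t:=E_\bQ[Z_T\mid\G_t]$, a $\bQ$-$\{\G_t\}$ martingale with $N_T=Z_T$, and put $\tilde N_t:=\langle v_t,X_t\rangle Z_t$, so that $\tilde N_T=\langle\bone,X_T\rangle Z_T=Z_T$. I would then apply the Itô product rule to $\tilde N_t$, using $dZ_t=Z_{t-}(\lambda_t-1)(dY_t-dt)$, $dX_t=A_tX_t\,dt+dM_t$, and the defining equation for $v$. Under $\bQ$ the processes $X$ and $Y$ have no common jumps (they are independent), so the only surviving cross-variation is $d[\langle v,X\rangle,Z]_t=Z_{t-}(\lambda_t-1)\langle(\Lambda_t^{-1}-I)v_{t-},X_{t-}\rangle\,dY_t$, contributed at jumps of $Y$. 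Collecting terms and using the elementary identities $\Lambda_tX_{t-}=\lambda_tX_{t-}$, $(\lambda_t-1)X_{t-}=(\Lambda_t-I)X_{t-}$ and $\lambda_t(\lambda_t^{-1}-1)=-(\lambda_t-1)$, the $dt$-part and the $dY_t$-part each cancel identically, leaving $d\tilde N_t=Z_{t-}\langle v_{t-},dM_t\rangle$. Since $M$ is a $\bQ$-martingale for the $X$-generated filtration and $\{\G_t\}$ enlarges it only by $\Y_T$, which is $\bQ$-independent of $X$, $M$ remains a $\bQ$-$\{\G_t\}$ martingale; hence $\tilde N$ is a local $\bQ$-$\{\G_t\}$ martingale, and a true one under the integrability already assumed (Assumption \ref{assn:Zmart}, together with the boundedness of $X$ and of $v$ on $[0,T]$). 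As $\tilde N$ and $N$ are $\bQ$-$\{\G_t\}$ martingales agreeing at $T$, they coincide: $\langle v_t,X_t\rangle Z_t=E_\bQ[Z_T\mid\G_t]$.

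The conclusion now follows by conditioning down. Using the tower property over $\Y_T\subseteq\G_t$, the fact that $X_t\in\F_t\subseteq\G_t$, the identity $\langle v_t,X_t\rangle X_t=\diag(v_t)X_t$ (valid since $X_t$ is a basis vector), and the $\Y_T$-measurability of $v_t$,
\[r_t=E_\bQ[Z_TX_t\mid\Y_T]=E_\bQ\big[X_t\,E_\bQ[Z_T\mid\G_t]\mid\Y_T\big]=E_\bQ[\langle v_t,X_t\rangle Z_tX_t\mid\Y_T]=\diag(v_t)\,E_\bQ[Z_tX_t\mid\Y_T].\]
Finally $E_\bQ[Z_tX_t\mid\Y_T]=E_\bQ[Z_tX_t\mid\Y_t]=q_t$, because $Z_tX_t$ is $\F_t$-measurable and, under $\bQ$, the increments of $Y$ on $]t,T]$ are independent of $\F_t$; this yields $r_t=\diag(v_t)q_t$. (As a by-product, on $\{X_t=e_i\}$ one reads off $v_t^{(i)}=E_\bQ[Z_T/Z_t\mid X_t=e_i,\Y_T]$, which is the natural interpretation of the smoother weights.)

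The step I expect to be the genuine work is the Itô product-rule computation, i.e.\ verifying that the $dt$- and $dY_t$-parts of $d\tilde N_t$ each vanish. It is purely algebraic but must be carried out carefully, tracking left limits, the jump of $v$ at times when $Y$ jumps, and the single surviving cross-variation term; everything else — well-posedness of the backward equation, the filtration-enlargement remark, and the two conditioning identities — rests on the $\bQ$-independence of $X$ and $Y$ and is routine.
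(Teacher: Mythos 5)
Your proof is correct, but it runs in the opposite direction from the paper's. The paper first \emph{constructs} $v_t$ abstractly: it writes $E_{\bQ}[Z_TX_t|\Y_T]=E_{\bQ}[Z_tX_tE_{\bQ}[Z_T/Z_t\mid\Y_T\vee\F_t]|\Y_T]$, observes that $Z_T/Z_t$ depends on $X$ only through $(X_s)_{s\geq t}$ so that $E_{\bQ}[Z_T/Z_t\mid\Y_T\vee\F_t]=\langle v_t,X_t\rangle$ by the Markov property, and obtains $r_t=\diag(v_t)q_t$ immediately; it then \emph{derives} the dynamics of $v$ by positing the ansatz $dv_t=H_tv_t\,dt+K_tv_t\,dY_t$ and matching coefficients in $0=d\langle v_t,q_t\rangle$ (using that $\langle v_t,q_t\rangle=E_{\bQ}[Z_T|\Y_T]$ is constant in $t$). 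You instead \emph{define} $v$ as the pathwise backward solution of the stated equation and \emph{verify} the identity by showing $\langle v_t,X_t\rangle Z_t$ is a $\bQ$-martingale in $\G_t=\F_t\vee\Y_T$ closing $Z_T$, then conditioning down. Your product-rule cancellation is right (I checked the $dt$, $dY_t$ and cross-variation terms), and your final conditioning steps use exactly the independence facts the paper uses implicitly when it writes $\diag(v_t)E_{\bQ}[Z_tX_t|\Y_T]=\diag(v_t)q_t$. What each route buys: the paper gets existence and the probabilistic meaning of $v_t$ for free (your parenthetical $v_t^{(i)}=E_{\bQ}[Z_T/Z_t\mid X_t=e_i,\Y_T]$ is its starting point), but its coefficient-matching is really only a consistency check --- the single scalar identity $d\langle v_t,q_t\rangle=0$ does not by itself force $H_t=-(A_t^\top-\Lambda_t+I)$ and $K_t=\Lambda_t^{-1}-I$ without further argument, so your verification is in that respect the more complete logic. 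The one place your argument needs more care than you give it is the upgrade from local to true martingale: $v$ is pathwise finite but not uniformly bounded (its bound depends on the path of $Y$ through the backward flow), so ``boundedness of $v$ on $[0,T]$'' is not automatic; you should either localize and identify $\tilde N$ with the closed martingale $N_t=E_{\bQ}[Z_T\mid\G_t]$ via a class-(D) argument, or impose integrability alongside Assumption~\ref{assn:Zmart}. This is a minor technical gap of the same order as those the paper itself tolerates.
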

\begin{proof}
First note that 
\[E_{\bQ}[Z_TX_t|\Y_T]= E_{\bQ}[Z_tX_t E_{\bQ}[Z_T/Z_t|\Y_T\vee\F_t]|\Y_T].\]
As $X$ is a Markov chain and
\[Z_T/Z_t = \exp\Big(-\int_{]t,T]}(\lambda_{u}-1)du\Big) \Big(\prod_{u\in]t,T]} \lambda_u^{\Delta Y_u}\Big)\]
does not depend on $X_s$ for $s<t$, we have 
\[E_{\bQ}[Z_T/Z_t|\Y_T\vee\F_t] = E_{\bQ}[Z_T/Z_t|\Y_T\vee\sigma(X_t)] = \langle v_t, X_t\rangle\]
for some family of $\Y_T$-measurable random variables $\{v_t\}_{t\leq T}$. Therefore, we can write
\[\begin{split}
   r_t=E_{\bQ}[Z_TX_t|\Y_T]&= E_{\bQ}[Z_t X_t \langle v_t, X_t\rangle|\Y_T]= E_{\bQ}[Z_t X_t X_t^\top v_t|\Y_T]\\
&= E_{\bQ}[Z_t \diag(v_t)X_t |\Y_T] =\diag(v_t) E_{\bQ}[Z_t X_t |\Y_T]\\&= \diag(v_t) q_t,
  \end{split}
\]
which is a vector with entries $(r_t)_i=(v_t)_i(q_t)_i$. These variables satisfy $v_T = \bone$ and 
\[E_{\bQ}[Z_T|\Y_T]=\langle r_t, \bone\rangle = \langle  v_t, q_t\rangle\]
for all $t$. 

Making the ansatz $dv_t = H_t v_t dt + K_t v_t dY_t$ for some predictable matrix processes $H,K$, we expand using Ito's rule to find
\begin{align*}
 0&=d\langle v_t, q_t\rangle = \langle  dv_t, q_t\rangle + \langle  v_t, dq_t\rangle + \langle \Delta v_t , \Delta q_t\rangle\\
&= \langle H_t v_tdt+K_t v_tdY_t, q_t\rangle  + \langle  v_t, A_tq_t dt + (\Lambda_t-I)q_t (dY_t-dt)\rangle\\
&\qquad + \langle K_t v_t, (\Lambda_t-I) q_t\rangle dY_t\\
&= \Big\langle v_t, \Big(H_t^\top + A_t - (\Lambda_t-I)\Big) q_t \Big\rangle dt\\
&\qquad \Big\langle v_t, \Big(K_t^\top + (\Lambda_t-I)+ K_t^\top(\Lambda_t-I)\Big) q_t \Big\rangle dY_t.
\end{align*}
Hence, we see that $v_t$ must have the dynamics
\[dv_t = -(A^\top - \Lambda_t+I)v_{t-} dt +(\Lambda_t^{-1}-I)v_{t-} dY_t.\]
\end{proof}
\begin{remark}
It is useful to see that, at a time $\tau$ where $Y$ jumps, we have 
\[v_\tau = v_{\tau-} + (\Lambda_\tau^{-1}-I)v_{\tau-} = \Lambda_\tau^{-1} v_{\tau-}\]
and so $v_{\tau-} = \Lambda_{\tau} v_\tau$.
\end{remark}

\begin{remark}
 In both the filter and the smoother, we have derived equations for the unnormalised densities $q$ and $r$. However, practically it is often convenient to normalise `on the fly', to increase numerical stability. As all our equations are linear in $q$ and $v$ (and $r$ is simply a product of $q$ and $v$), normalising by multiplication by a scalar process does not cause any difficulties, and the result is still correct, up to multiplication by a constant.
\end{remark}

\subsection{Concrete Example: Hawkes' process}
We will now consider a concrete example, for which we can derive an explicit filter. In this context, we will also derive a higher-order approximation to the robust filter in discrete time.

Let $\lambda$ be given by
\[\lambda(X_{t-}, t, Y_{(\cdot)}) = \langle \alpha, X_{t-}\rangle + \langle \beta, X_{t-}\rangle \int_{[0, t[} e^{-\langle \gamma, X_{t-}\rangle (t-s)} dY_s\]
for known nonnegative vectors $\alpha, \beta, \gamma$. We consider two key settings, when jumps are observed in continuous time, and when count data is observed on a discrete timegrid.

\subsubsection{Continuous observations}
When jumps are observed in continuous time, is is in principle possible to implement the filter fully. This involves solving the following system of equations
\[\begin{split}
\text{Between jumps:}&\\
 &\begin{cases}
  d \Lambda_t &= -\diag(\gamma) (\Lambda_t-\diag(\alpha)) dt\\
  d q_t &= (A - (\Lambda_t- I) ) q_t dt \\
  d v_t &= -(A^\top - (\Lambda_t- I) ) v_t dt \\
   \end{cases}\\
\text{At jumps:}&\\
 &\begin{cases}
  \Lambda_{t+} &= \Lambda_{t} + \diag(\beta) \\
  q_t &=  \Lambda_{t-} q_{t-}\\
  v_{t-} &= \Lambda_{t-} v_t \\
   \end{cases}\\
\text{Boundary Values:}& \qquad \Lambda_0= \diag(\alpha), \quad  q_0 = X_0, \quad  v_T= \bone
\end{split}\]
(Note that $\Lambda$ is left-continuous, while $q$ and $v$ are cadlag.) The only difficulty in resolving this system is that we have non-autonomous ODEs for $q$ and $v$, and so a numerical integration technique is needed. For example, using an exponential integrator method gives the solution between jumps of $Y$ at ${t_{i-1}}$ and ${t_i}$, 
\[\begin{cases}
  \Lambda_{t_i} &= \diag(\alpha)+ \exp(-\diag(\gamma)(t_i-t_{i-1})) (\Lambda_{t_{i-1}}+\diag(\beta-\alpha))\\
  q_{t_i} &\approx \Lambda_{t_i}\exp\big((A - (\Lambda_{t_{i-1}}- I))(t_i-t_{i-1})\big) q_{t_{i-1}} \\
  v_{t_{i-1}} &\approx \exp\big((A^\top - (\Lambda_{t_{i-1}}- I))(t_i-t_{i-1})\big) \Lambda_{t_i} v_{t_i}\\
   \end{cases}\\
\]
Including further points between the jumps and using these in the solution of the ODEs will also improve performance, depending on the parameters of the problem.

\subsubsection{Discrete observations}
When observations are observed only discretely, that is, we have a fixed time grid $\{t_i\}$ and observe the values $Y_{t_i}$, then an alternative method is appropriate. First note that, 
\[\begin{split}
   \Lambda_{t_i} &=\diag(\alpha) + \exp(-\diag(\gamma)(t_i-t_{i-1}))(\Lambda_{t_{i-1}} -\diag(\alpha)) \\
& \qquad + \diag(\beta)\int_{[t_{i-1}, t_i[}\exp(-\diag(\gamma)(s-t_{i-1})) dY_s
  \end{split}\]
Provided $\gamma(t_i-t_{i-1})$ is not large, $Y$ is approximately a Poisson process with constant rate on the interval $]t_{i-1}, t_i]$, and so if we observe $\delta Y_{t_i}$ jumps in the interval $]t_{i-1}, t_i]$, then these jumps are approximately uniformly distributed within the interval.  Therefore,
\begin{equation}\label{eq:lambdaDTapprox}
 \begin{split}
   \Lambda_{t_i} &\approx \diag(\alpha) + \exp(-\diag(\gamma)(t_i-t_{i-1}))(\Lambda_{t_{i-1}} -\diag(\alpha)) \\
& \qquad + \diag\Big(\beta\, \frac{1-\exp(-\gamma(t_i-t_{i-1}))}{\gamma(t_i-t_{i-1})}\Big) \delta Y_{t_i}
  \end{split}
\end{equation}
where the operations inside the parentheses on the second line are to be read componentwise.
(Making this approximation gives a slight reduction in bias vs a na{\"\i}ve scheme, where the last term is simply $+\diag(\beta)\,\delta Y_{t_i}$.)

As the equations for $q$ and $v$ are exponential in form, an effective approximation is given by 
\begin{equation}\label{eq:filterDTapprox}\begin{split}
     q_{t_i} &\approx \exp\big((A - (\Lambda_{t_{i-1}}- I))(t_i-t_{i-1})+ {\delta Y_{t_i}} \log(\Lambda_{t_{i-1}})\big)\cdot q_{t_{i-1}} \\
  v_{t_{i-1}} &\approx \exp\big((A^\top - (\Lambda_{t_{i-1}}- I))(t_i-t_{i-1})+ {\delta Y_{t_i}} \log(\Lambda_{t_{i-1}})\big)\cdot  v_{t_i}.\\
  \end{split}
\end{equation}
Combining these equations one can approximate the continuous time filter.

\subsection{Robust filters}
For completeness, we now outline a `robust' filter, where we can avoid integrating against $Y$ in the filtering equations. 

\begin{theorem}
Define the matrix-valued process
\[\Gamma_t = \exp\Big(-\int_{]0,t]} (\Lambda_s - I)ds\Big)\cdot \prod_{s\in]0,t]} (\Lambda_s)^{\Delta Y_s}.\]
Then $\bar q_t:=\Gamma^{-1}_t q_t$ and $\bar v_t = \Gamma_t v_t$ satisfy
\[\begin{split}
   \bar q_t &= \bar q_0 + \int_{]0,t]} \Gamma^{-1}_s A_s \Gamma_s \bar q_s ds\\
 \bar v_t &= \bar v_0 - \int_{]0,t]}\Gamma_s A^\top_s \Gamma_s^{-1}\bar v_s ds= \Gamma_T\bone + \int_{]t,T]}\Gamma_s A^\top_s \Gamma_s^{-1}\bar v_s ds
  \end{split}
\]
Note that these equations do not directly involve the observation process $Y$ (as it only appears through $\Gamma$).
\end{theorem}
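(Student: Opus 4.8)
The plan is to obtain a stochastic differential equation for $\Gamma_t$ and for $\Gamma_t^{-1}$, and then to read off the equations for $\bar q$ and $\bar v$ by applying the product (integration-by-parts) rule, exploiting throughout the fact that $\Gamma_t$, $\Gamma_t^{-1}$ and $\Lambda_t$ are all \emph{diagonal}, hence mutually commuting. That commutativity is precisely what makes the various cross terms collapse.

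First I would note that each diagonal entry $(\Gamma_t)_{ii}=\exp\bigl(-\int_{]0,t]}((\Lambda_s)_{ii}-1)\,ds\bigr)\prod_{s\in]0,t]}((\Lambda_s)_{ii})^{\Delta Y_s}$ has exactly the form of the process $Z$ of Assumption~\ref{assn:Zmart}, with the scalar rate $\lambda$ replaced by the $\Y$-predictable, left-continuous process $s\mapsto(\Lambda_s)_{ii}$. Hence the same Dol\'eans--Dade identity that defines $Z$ gives $\Gamma_t=I+\int_{]0,t]}(\Lambda_s-I)\Gamma_{s-}(dY_s-ds)$, i.e. $d\Gamma_t=(\Lambda_t-I)\Gamma_{t-}(dY_t-dt)$, with jump $\Delta\Gamma_\tau=(\Lambda_\tau-I)\Gamma_{\tau-}$ (equivalently $\Gamma_\tau=\Lambda_\tau\Gamma_{\tau-}$) at a jump time $\tau$ of $Y$. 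Since $\lambda>0$ by standing assumption, each $(\Gamma_t)_{ii}$ is strictly positive, so $\Gamma_t$ is invertible, and differentiating the explicit exponential (or the relation $\Gamma_t\Gamma_t^{-1}=I$) yields $d\Gamma_t^{-1}=(\Lambda_t-I)\Gamma_{t-}^{-1}\,dt+(\Lambda_t^{-1}-I)\Gamma_{t-}^{-1}\,dY_t$, with jump $\Delta\Gamma_\tau^{-1}=(\Lambda_\tau^{-1}-I)\Gamma_{\tau-}^{-1}$.

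Next I would apply the product rule to $\bar q_t=\Gamma_t^{-1}q_t$:
\[
d(\Gamma_t^{-1}q_t)=\Gamma_{t-}^{-1}\,dq_t+(d\Gamma_t^{-1})\,q_{t-}+d[\Gamma^{-1},q]_t,
\]
substituting the filter equation $dq_t=(\Lambda_t-I)q_{t-}(dY_t-dt)+A_tq_t\,dt$ and the equation for $\Gamma^{-1}$ above, and computing the covariation from the common jumps of $Y$, namely $\Delta\Gamma_\tau^{-1}\Delta q_\tau=(\Lambda_\tau^{-1}-I)(\Lambda_\tau-I)\Gamma_{\tau-}^{-1}q_{\tau-}$ after commuting the diagonal factors. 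Collecting terms, the $(\Lambda_t-I)$-pieces in the $dt$-coefficient cancel by commutativity, leaving $\Gamma_{t-}^{-1}A_tq_t=\Gamma_t^{-1}A_t\Gamma_t\bar q_t$ (the value at $t-$ versus $t$ being irrelevant under the Lebesgue integral); and the $dY_t$-coefficient is $\bigl[(\Lambda_t-I)+(\Lambda_t^{-1}-I)+(\Lambda_t^{-1}-I)(\Lambda_t-I)\bigr]\Gamma_{t-}^{-1}q_{t-}=0$, since $(\Lambda_t^{-1}-I)(\Lambda_t-I)=2I-\Lambda_t-\Lambda_t^{-1}$. This is the claimed equation for $\bar q$. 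The computation for $\bar v_t=\Gamma_t v_t$ is entirely parallel: using $d\Gamma_t=(\Lambda_t-I)\Gamma_{t-}(dY_t-dt)$, the smoother equation $dv_t=-(A_t^\top-\Lambda_t+I)v_{t-}\,dt+(\Lambda_t^{-1}-I)v_{t-}\,dY_t$, and the jump term $\Delta\Gamma_\tau\Delta v_\tau=(\Lambda_\tau-I)(\Lambda_\tau^{-1}-I)\Gamma_{\tau-}v_{\tau-}$, one finds the $dY_t$-coefficient again vanishes and the $dt$-coefficient collapses to $-\Gamma_tA_t^\top v_t=-\Gamma_tA_t^\top\Gamma_t^{-1}\bar v_t$. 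Integrating, and using $v_T=\bone$ so that $\bar v_T=\Gamma_T\bone$, converts the forward equation into the displayed backward form.

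The only genuinely delicate part is the bookkeeping: tracking the jump and covariation terms correctly and respecting the left-continuity of $\Lambda$ (so that $\Lambda_\tau=\Lambda_{\tau-}$ enters each jump) against the cadlag nature of $q$, $v$ and $\Gamma$. I expect this, and in particular the verification that the $dY_t$-coefficients vanish identically in both the $\bar q$ and $\bar v$ computations, to be the main (though routine) effort. No estimates are required, since all processes involved are of finite variation given $Y$, and the invertibility of $\Gamma$ is immediate from $\lambda>0$.
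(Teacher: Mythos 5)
Your proposal is correct and follows essentially the same route as the paper: write down the dynamics of $\Gamma^{-1}$ (and $\Gamma$), apply the It\^o product rule to $\Gamma_t^{-1}q_t$ and $\Gamma_t v_t$, and use the diagonality of $\Gamma$ and $\Lambda$ to make the $dY$-coefficients cancel, leaving only the stated $dt$-dynamics. The extra detail you supply (the Dol\'eans--Dade identification of $\Gamma$ and the explicit check that $(\Lambda-I)+(\Lambda^{-1}-I)+(\Lambda^{-1}-I)(\Lambda-I)=0$) is consistent with, and slightly more explicit than, the paper's computation.
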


\begin{proof}
First note that $\Gamma, \Lambda$ are both diagonal matrices, and therefore commute. From the definition, we see that $\Gamma^{-1}_t$ has dynamics
\[d\Gamma_t^{-1} = (\Lambda_t - I)\Gamma_t^{-1} dt + (\Lambda_t^{-1}-I)\Gamma_{t-}^{-1} dY_s.\]
Now by Ito's rule,
\[\begin{split} d(\Gamma_t^{-1} q_t) &= \Gamma_{t-}^{-1}\big(\Lambda_t - I)q_{t-}(dY_t-dt)+ Aq_t dt\big) \\
&\qquad+ \big((\Lambda_t - I)\Gamma_t^{-1} dt + (\Lambda_t^{-1}-1)\Gamma_{t-}^{-1} dY_s\big)q_{t-}\\
   &\qquad +(\Lambda_t^{-1}-1)\Gamma_{t-}^{-1}(\Lambda_t - I)q_{t-} dY_s\\
&=\Gamma^{-1}_t A q_t dt
  \end{split}
\]
so, writing $\bar q_t = \Gamma^{-1}_t q_t$, we have
\[\bar q_t = \bar q_0 + \int_{]0,t]} \Gamma^{-1}_s A \Gamma_s \bar q_s ds.\]

The corresponding smoother can also be derived, by writing
\[\langle q_t, v_t\rangle = \langle \bar q_t, \bar v_t\rangle = \langle \Gamma_t^{-1} q_t, \bar v_t\rangle\]
from which we see $\bar v_t = \Gamma_t v_t$. Applying Ito's rule, this satisfies
\[\bar v_t = \bar v_0 - \int_{]0,t]}\Gamma_s A^\top \Gamma_s^{-1}\bar v_s ds,\]
or,  as $v_T=\bone$,
\[\bar v_t = \Gamma_T\bone + \int_{]t,T]}\Gamma_s A^\top \Gamma_s^{-1}\bar v_s ds.\]
\end{proof}

\begin{remark}Numerically,  these equations have a significant flaw. As $\Gamma$ involves an exponential, unless $\Lambda_s$ is close to $I$, we will typically have at least one of $\Gamma$ and $\Gamma^{-1}$ growing quickly. Due to noncommutativity, this implies that some components of $\Gamma^{-1}_s A \Gamma_s$ will be large. As $\bar q$ grows like $\exp(\Gamma^{-1}_s A \Gamma_s)$, this leads to a double exponential, which rapidly becomes numerically unstable. This limits the practical applicability of these robust filters.
\end{remark}

\section{Estimation of Parameters}\label{sec:estimation}
To implement these filters, it is often necessary to estimate the parameters of the process using a training dataset. We separate our estimation into two parts -- estimation of the parameters determining $\lambda$, and estimation of the transition matrix $A$ of the chain. This distinction is useful as we will typically have a large number of observed jumps in $Y$ relative to the number of jumps in the underlying chain -- indeed, it is this `multiscale' behaviour which allows the filter to work. As the estimation of $A$ depends on the number of jumps of $X$, this implies that we will often be able to estimate the parameters of $\lambda$ very well given $A$, however our estimation of $A$ requires a much larger amount of data (and is correspondingly much more costly).

\subsection{Observation Parameters}
We proceed to find a maximum likelihood estimator given the state path $X$, which we can then combine with the EM algorithm. The following result is standard (see, for example, \cite{Chen13})
\begin{lemma}
The log-likelihood function given $X$ and observations on the interval $[0,T]$ is, up to the addition of a constant,
\[\sum_{u:\Delta Y_u \neq 0} \log\big(\lambda(u, X_u, Y_{(\cdot)})\big) - \int_0^T\big(\lambda(u, X_u, Y_{(\cdot)})\big) du\]
\end{lemma}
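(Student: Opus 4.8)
The plan is to derive the likelihood via the Girsanov-type change of measure already set up in the paper, specializing the density $Z$ to the filtration $\F_T$ where the full state path $X$ is known. Recall that under the reference measure $\bQ$, the process $Y$ is a standard Poisson process independent of $X$, and $\bP$ is obtained by $d\bP/d\bQ|_{\F_T} = Z_T$. Conditioning on the $\sigma$-algebra generated by the entire path $(X_s)_{s\le T}$ (which under $\bQ$ is independent of $Y$), the conditional law of $Y$ on $[0,T]$ under $\bP$ relative to its law under $\bQ$ has Radon--Nikodym derivative exactly $Z_T$ evaluated along the given $X$-path. Since the law of a standard Poisson process is a fixed reference measure not depending on any parameters, the log-likelihood of the parameters given $X$ and the observed path $Y_{(\cdot)}$ on $[0,T]$ equals $\log Z_T$ up to an additive constant (the log-density of the standard Poisson path, which carries no parameter dependence).

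The key computational step is then to read off $\log Z_T$ from the explicit product/exponential formula for $Z$ given just before Assumption~\ref{assn:Zmart}, namely
\[Z_T = \exp\Big(-\int_{]0,T]}(\lambda_u - 1)\,du\Big)\Big(\prod_{u\in]0,T]}\lambda_u^{\Delta Y_u}\Big),\]
with $\lambda_u = \lambda(u, X_{u-}, Y_{(\cdot)})$. Taking logarithms turns the product over jump times into a sum $\sum_{u:\Delta Y_u\neq 0}\log\lambda(u, X_{u-}, Y_{(\cdot)})$ and the exponential into $-\int_0^T(\lambda_u - 1)\,du = -\int_0^T \lambda_u\,du + T$; the constant $+T$ is absorbed into the "constant" of the statement. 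This yields precisely the claimed expression, with the only cosmetic point being that $X_{u-}$ appears in the density whereas the statement writes $X_u$; since $X$ has only finitely many jumps on $[0,T]$ and $Y$ a.s.\ does not jump simultaneously with $X$ (under $\bP$, given $Y$'s intensity is finite and $X$'s jump times are diffuse), replacing $X_{u-}$ by $X_u$ changes neither the integral (a Lebesgue integral, insensitive to a null set of times) nor the jump sum (at jump times $u$ of $Y$ we a.s.\ have $X_{u-}=X_u$).

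The main obstacle, such as it is, is purely a matter of rigor in the conditioning argument: one must justify that conditioning the change-of-measure identity on the full path of $X$ is legitimate and that $Z_T$ indeed serves as the conditional density of $Y$'s path. This rests on the product structure of $\bQ$ (independence of $Y$ and $X$) together with Assumption~\ref{assn:Zmart} ($Z$ a true martingale, so that $\bP$ is a genuine probability measure and the density is normalized), and on a disintegration/Fubini argument over the path space $\bD\times\X^{[0,T]}$. Since $Z_T$ is, for each fixed $X$-path, a measurable functional of $Y_{(\cdot)}$ alone, the disintegration is routine. Everything else is the elementary algebra of taking the logarithm of the displayed formula for $Z$, so I expect the proof to be short; indeed the paper cites this as "standard" (cf.\ \cite{Chen13}), and the contribution here is just to note it follows immediately from the reference-measure framework already in place.
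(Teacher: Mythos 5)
Your derivation is correct. The paper itself offers no proof of this lemma --- it simply labels the result ``standard'' and cites \cite{Chen13} --- so there is no in-paper argument to compare against; but your route, identifying the conditional log-likelihood of $Y$ given the $X$-path with $\log Z_T$ via the reference measure $\bQ$ already constructed in the paper, and then reading the formula off the explicit exponential/product representation of $Z$, is exactly the standard derivation of the point-process likelihood and fits the paper's framework. The two technical points you flag are the right ones and are handled adequately: (i) that $Z_T$ really is the conditional density requires $E_\bQ[Z_T\mid\sigma(X_s;s\le T)]=1$, which follows since $Y$ remains a standard Poisson process under $\bQ$ in the filtration initially enlarged by the $X$-path (independence) together with Assumption~\ref{assn:Zmart}; and (ii) the $X_{u-}$ versus $X_u$ discrepancy is almost surely immaterial for both the Lebesgue integral and the jump sum. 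No gaps.
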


This then leads to the partial-information likelihood function, given the probabilties $\hat r_u = E[X_u|\Y_u]\propto r_u$. We write $\blambda_u$ for the vector with entries $\lambda(u, e_i, Y_{(\cdot)})$.
\begin{lemma}
The log-likelihood function given observations on the interval $[0,T]$ is, up to the addition of a constant,
\[\sum_{u:\Delta Y_u \neq 0} \langle \hat r_u, \log(\blambda_u)\rangle - \int_0^T\langle \hat r_u, \blambda_u \rangle du\]
\end{lemma}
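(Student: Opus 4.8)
The plan is to obtain the stated expression by taking the $\Y_T$-conditional expectation of the complete-data log-likelihood of the previous lemma (the log-likelihood given $X$), exploiting that $X$ takes values in the standard basis vectors $\X$. I read ``the log-likelihood given observations on $[0,T]$'' here as the expected complete-data log-likelihood $E[\ell\mid\Y_T]$ — the object maximised in the M-step of an EM procedure — rather than the true marginal log-likelihood $\log\langle q_T,\bone\rangle$, which is of a different form; correspondingly I take $\hat r_u=E[X_u\mid\Y_T]$ to be the normalised smoother (consistent with $\hat r_u\propto r_u$ as written).

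First I would rewrite the complete-data log-likelihood $\ell=\sum_{u:\Delta Y_u\neq 0}\log\lambda(u,X_u,Y_{(\cdot)})-\int_0^T\lambda(u,X_u,Y_{(\cdot)})\,du$ in vector form. Since $X_u\in\X$, one has $\lambda(u,X_u,Y_{(\cdot)})=\langle\blambda_u,X_u\rangle$, and because $X_u$ is an indicator vector the logarithm only touches the single nonzero coordinate, so $\log\langle\blambda_u,X_u\rangle=\langle\log\blambda_u,X_u\rangle$; this linearisation is the crux of the argument. (I follow the preceding lemma in writing $X_u$ rather than $X_{u-}$: a standard argument gives that $X$ and $Y$ almost surely share no jump times, both having absolutely continuous compensators, so $X_u=X_{u-}$ at every jump of $Y$, while the Lebesgue integral ignores the countable jump set of $X$.) Hence $\ell=\sum_{u:\Delta Y_u\neq 0}\langle\log\blambda_u,X_u\rangle-\int_0^T\langle\blambda_u,X_u\rangle\,du$. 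Then I would apply $E[\,\cdot\mid\Y_T]$: the vectors $\blambda_u$ and $\log\blambda_u$ are $\Y_u$-measurable — as noted in the proof of the filter theorem, $\Lambda_u$, hence $\blambda_u$, is $\Y_u$-measurable — hence $\Y_T$-measurable for $u\le T$, and the jump set $\{u\le T:\Delta Y_u\neq 0\}$ is $\Y_T$-measurable; pulling these out of the conditional expectation and exchanging it with the sum and the integral yields $E[\ell\mid\Y_T]=\sum_{u:\Delta Y_u\neq 0}\langle\log\blambda_u,\hat r_u\rangle-\int_0^T\langle\blambda_u,\hat r_u\rangle\,du$, which is the claim.

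The main obstacle is the rigorous justification of interchanging $E[\,\cdot\mid\Y_T]$ with the time integral and with the jump sum — a conditional-Fubini step requiring integrability of $u\mapsto\langle\blambda_u,X_u\rangle$ on $[0,T]$ and of $\sum_{u\le T,\,\Delta Y_u\neq 0}\bigl|\langle\log\blambda_u,X_u\rangle\bigr|$. Both are controlled by the standing hypotheses: Assumption \ref{assn:Zmart} on $Z$ together with the assumed a.s.\ finiteness of $Y_T$ and integrability of $\lambda$ (and, where needed, the generalised Novikov condition), under which the complete-data log-likelihood is integrable and the exchange is legitimate. Everything else — the basis-vector linearisation and the measurability bookkeeping around jump times — is routine.
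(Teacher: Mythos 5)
Your proof is correct and is precisely the argument the paper intends: the paper states this lemma with no proof at all, treating it as an immediate consequence of the complete-data likelihood, and your steps — the basis-vector linearisation $\log\langle\blambda_u,X_u\rangle=\langle\log\blambda_u,X_u\rangle$, pulling the $\Y_T$-measurable $\blambda_u$ and jump set out of the conditional expectation — are exactly the omitted computation. Your reading of $\hat r_u$ as the normalised smoother (resolving the paper's own inconsistency between writing $E[X_u|\Y_u]$ and $\propto r_u$) and of the statement as the EM $Q$-function rather than the marginal log-likelihood is the sensible one.
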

This function can then be optimised numerically, or analytically, depending on the properties of $\lambda$. The EM algorithm then allows us to iterate between estimating the parameters of $\blambda$ given $r$, which can be done using numerical optimization techniques, and calculating $r$ given the parameters of $\blambda$, which can be done using the filtering equations.

\begin{remark}
In some cases, it may be difficult to specify good initial values for the parameters of $\blambda$, however, plotting the observation data may reveal natural approximate clusterings by inspection. Using this \emph{ad hoc} clustering as an initial value of $r$ allows us to first calculate the parameters of $\blambda$, and then to implement these via the filter. This technique is used below.
\end{remark}

\begin{remark}
 For real data, this method may be problematic, as it is highly sensitive to deviations from the theoretical model. Determining modifications which result in a stable and robust methodology is an area of continuing research. Given the feedback effects in the EM algorithm, this problem can be significant, and readily appears for real data.
\end{remark}

\subsubsection{Example: Discrete observation Hawkes' process}
In discrete time, various approximations of this log-likelihood are possible. Given the previously calculated values of $\blambda$ and $r$, from (\ref{eq:lambdaDTapprox}) and (\ref{eq:filterDTapprox}),  a simple approximation is given by 
\[\sum_{i}\Big\langle \hat r_{t_i} ,  \big(\delta Y_{t_i} \log(\blambda_{t_{i-1}})\big) - \blambda_{t_{i-1}}(t_i-t_{i-1})\Big\rangle\]
which can be rapidly calculated. Maximization of this function can then be done using a variety of numerical methods.


\subsection{Underlying chain dynamics}
To estimate the underlying chain transition matrix is typically very costly, as there will be few jumps of the chain compared with the number of observations. This implies that, even with perfect observation of the underlying chain, the estimation of the transition matrix would be poor. 

One suggested method (see, for example \cite{Elliott1993}), is to use the EM algorithm, with the hidden variable being the number of transitions between states, and the occupation times of each state. This method, however, degenerates in this context, as will be seen in the following.

Let $J$ be the random matrix with entries $J_{ij}$, where $J_{ij}$ is the number of transitions from state $i$ to state $j$ over the observation period, and $J_{ii}=-\sum_j J_{ij}$ for all $i$. Let $K$ be the vector with the occupation time in each state. Then, from the structure of $X$, as $\hat r_t = E[X_t|\Y_T]$ we know
\begin{align*}
 E[K|\Y_T] &= E\Big[\int_{]0,T]} X_{t-}dt\Big|\Y_T\Big]  = \int_{]0,T]} \hat r_tdt \\
 E[J|\Y_T] &= E\Big[\int_{]0,T]} X_{t-} dX_t^\top\Big|\Y_T\Big]\\
&= E\Big[\int_{]0,T]} X_{t-}X_{t-}^\top A^\top dt + \int_{]0,T]} X_{t-} dM_t^\top\Big|\Y_T\Big]\\
&= \int_{]0,T]} E[X_{t-}X_{t-}^\top|\Y_T] A^\top dt=  \diag\Big(\int_{]0,T]} r_tdt \Big) A^\top.
\end{align*}
Therefore, given an initial estimate $\hat A$ of $A$, we have
\[E[J|\Y_T]\approx \diag\Big(\int_{]0,T]} \hat r_tdt \Big) \hat A^\top = \diag( E[K|\Y_T]) \hat A^\top.\]
The maximum likelihood estimate of $A$ given the numbers of transitions ($J$) and the occupation times ($K$) is 
$J^\top\diag(K)^{-1}$, so the EM algorithm gives the estimate
\[E[J|\Y_T]^\top\diag(E[K|\Y_T])^{-1} = \hat A \diag(E[K|\Y_T])\diag(E[K|\Y_T])^{-1}=\hat A\]
and therefore, starting from some initial estimate $\hat A_0$ and iterating the EM algorithm, we obtain the sequence of estimates $\hat A_0=\hat A_1=\hat A_2=...$  From this, it is clear that the EM algorithm applied in this way does note yield a consistent estimator of the rate matrix.

An alternative perspective, which we adopt below when modelling real data, is to regard the rate matrix in the model as a `tuning parameter' for the filter, which should be determined using expert judgement during calibration. From this perspective, one chooses a parameterized family of rate matrices, for example, 
\[\Big\{A=\epsilon\left[\begin{array}{cc} -1&  1\\  1& -1\end{array}\right]; \epsilon >0\Big\}\]
and then chooses, during calibration, the value of $\epsilon$ which gives acceptable performance of the filter in determining the underlying state. When the value of $\epsilon$ is large, the state is allowed to shift frequently, while when $\epsilon$ is slow, stronger data is needed before the filter detects a state change. The effectiveness of this choice of parameter should then be determined by considering out-of-sample performance.

\section{Simulated Numerical Results}\label{sec:simulated}

To assess the accuracy of these methods, the algorithm given by equations (\ref{eq:lambdaDTapprox}) and (\ref{eq:filterDTapprox}) was implemented in \texttt{R}. Data was simulated by first simulating the Markov chain (in continuous time) and then using a combination of the approach of Ogata \cite{Ogata1981} and a branching method (as suggested in Algorithm 1 of \cite{Moller2005}, ignoring edge effects) to simulate the Hawkes' process between jumps of the Markov chain.

Parameter values were chosen to give the following behaviour. We have a two-state chain with a few jumps over the timescale of simulation.  In the first state, the $Y$ process is noticeably self-exciting, however has a low base rate, that is, $\alpha$ is small, $\beta$ and $\gamma$ are moderately large. In the second state, the process is negligiably self exciting, but has a higher base rate, that is, $\alpha$ is moderately large, $\beta$ is small and $\gamma$ is moderately small. The parameters were then chosen so that, if the underlying state remains the same, the long run mean rate of jumps in $Y$ is the same in each state. The values chosen, for a simulation horizon of $T=1000$, were the transition matrix
\[A=\left[\begin{array}{cc} -0.01&  0.01\\  0.01& -0.01\end{array}\right]\]
and observation parameters
\begin{center}
\begin{tabular}{l ||c |c |c}
          & $\alpha$ & $\beta$ & $\gamma$\\
\hline
 State 1  & $6$      & $1$     & $10/7$\\
 State 2  & $18$     & $0.01$  & $0.1$\\
\end{tabular}
\end{center}
The average rate of jumps in each state, over the long-run without a state change, is $20= \frac{\alpha}{1-\beta/\gamma}$ per unit time. The number of jumps in each time interval of length $0.1$ was recorded. A typical sample path is shown below (Figure \ref{fig:simobs}).
\begin{figure}[ht]
\begin{center}
\includegraphics[width=0.9\textwidth]{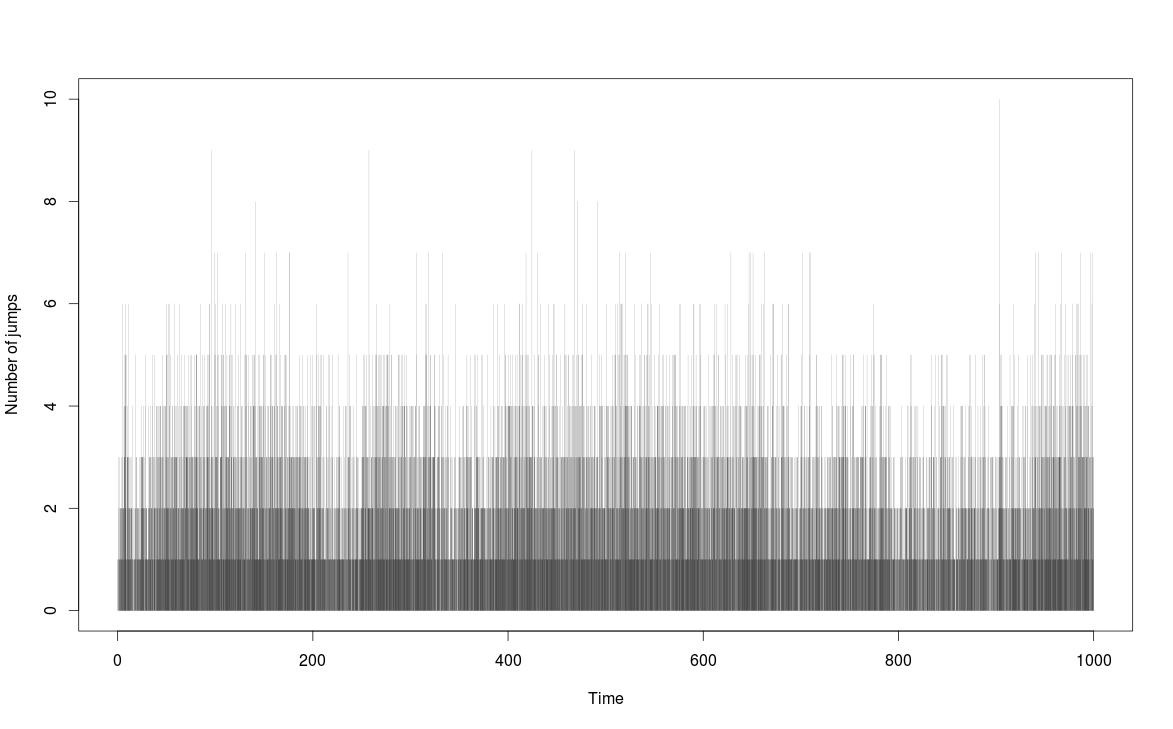}
\caption{A simulated observation path for the Markov-modulated Hawkes process}
\label{fig:simobs}
\end{center}
\end{figure}

For the above sample path, a crude initial clustering is to say that there are state changes at times approximately $50, 200, 250, 300, 400, 650$ and $950$, which can be performed by eye. This initial allocation can then be used to estimate the parameters, using the maximum likelihood estimator. The transition matrix $A$ is taken to be known. Optimisation of the likelihood was performed using the inbuilt Nelder--Mead method in the \texttt{optim} command. Given these estimates, one can apply the filter to determine the hidden state, and hence recalibrate, and repeat. The first few iterations of the parameter estimates are as follows.
\begin{center}
\begin{tabular}{l|cc|cc|cc}
 Iteration & $\alpha$ && $\beta$ && $\gamma$ &\\
\hline
1& 7.35 &20.23 &0.813& $8.6\times 10^{-08}$& 1.34& 0.207\\
2& 7.54 &19.85 &0.869& 0.0032 &  1.485 & 0.178\\
3& 7.64 &19.59 &0.891& 0.0050 &  1.538 & 0.164\\
4& 7.60 &19.45 &0.893& 0.0064 &  1.534 & 0.169
\end{tabular}
\end{center}

These estimates have some error, however the filtered and smoothed paths, using these final parameters, are given below (Figure \ref{fig:simfilt}), along with the true (unknown) state for comparison. From these paths, we can see that the filter sill has good performance with these noisy parameter values. We note that there are a couple of state changes (between $t=700$ and $t=800$) which are not detected by the smoother. This is unsurprising, as these changes were only of a short duration, and so the smoother determines that any apparent change in the rate of jumps is more likely due to random variation.
\begin{figure}[ht]
\begin{center}
\includegraphics[width=0.9\textwidth]{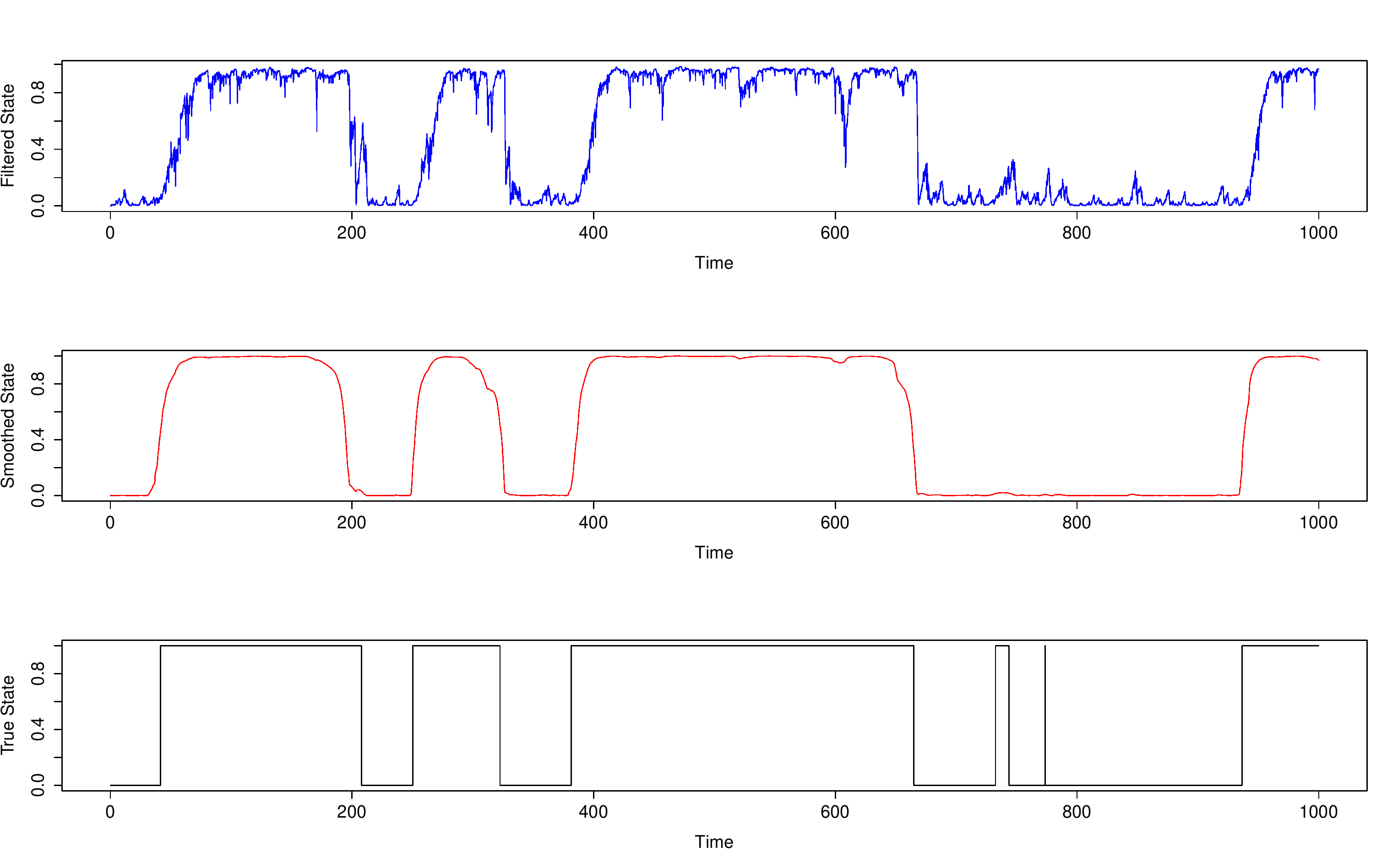}
\caption{Filtered, Smoothed and True hidden states (in particular the probability that $X_t=e_2$) for the observation path in Figure \ref{fig:simobs}.}
\label{fig:simfilt}
\end{center}
\end{figure}

We see that, for these parameter values, the filter performs well at detecting a change from State 2 to State 1 (the change is quite sharp), however is less effective at detecting changes from State 1 to State 2. The reason for this can be seen in Figure \ref{fig:simrates}, which shows the values of $\lambda(e_i, t, Y_{(\cdot)})$ for each $e_i$. When $X=e_1$ (which is the case in the middle of the plot), the difference in the potential rates in each state is quite pronounced. The low base rate of jumps means that $\lambda(e_1, t, Y_{(\cdot)})$ is typically low, with occasional large spikes. Conversely, when $X=e_2$, the rate of jumps is far more stable. Therefore $\lambda(e_1, t, Y_{(\cdot)})$ and $\lambda(e_2, t, Y_{(\cdot)})$ are relatively close, and $\lambda(e_1, t, Y_{(\cdot)})$ is less volatile than when $X=e_1$. As it is the relative difference in the rates which is important when determining the filter, we see that a real state change from State 2 to State 1 results in possible values of $\lambda$ 
becoming more distinct, and so such a change is quickly detected by the filter.
\begin{figure}[ht]
\begin{center}
\includegraphics[width=0.9\textwidth]{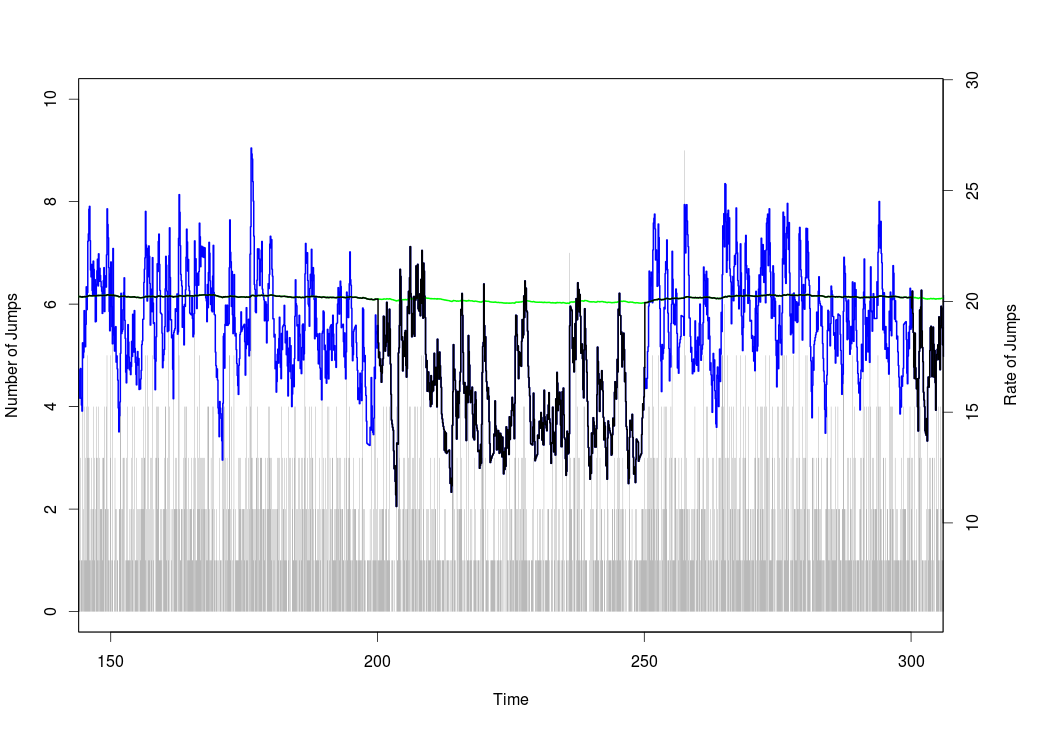}
\caption{Rates in each state for the observation path in Figure \ref{fig:simobs}, for times $t\in [150,300]$. The function $\lambda(e_1, t, Y_{(\cdot)})$, corresponding to the rate if $X$ were in State 1, is shown in blue, while $\lambda(e_2, t, Y_{(\cdot)})$ is in green, and $\lambda(X_t, t, Y_{(\cdot)})$ is in black.}
\label{fig:simrates}
\end{center}
\end{figure}

\section{Real-World Application}\label{sec:flash}

We now outline a possible application of these methods to real-world data. Given the complexities of building good models for the real world, this section should be seen as empirical support for the utility of these methods, rather than as a practical prescription for modelling. In particular, we will use realistic but ad-hoc choices of parameters, selected to demonstrate the effectiveness of the filter. We will give times in `$t=$seconds after 9:30am'.

On the 6th May 2010, markets experienced the now well known `flash-crash', in which the Dow fell 600 points (5-6\%) in the space of 5 minutes, and then rapidly recovered. A primary factor in this was  trading on the E-mini S\&P futures contract on the Chicago Mercantile Exchange Globex platform. A report \cite{UCFT2010} on this event describes the market behaviour in detail, a further study can be found in \cite{Kirilenko2011}. After a day of volatile trading, a fundamental trader issued a large sell order on the E-mini contract at 2:32pm ($t=18120$), which was executed without regard to price or time, over a period of 20 minutes. This sell order was, most likely, primarily absorbed by high frequency traders, fundamental buyers in the futures market, and cross market arbitrageurs (particularly through pairs trading with the SPY contract\footnote{The SPDR S\&P500 ETF (SPY) is an exchange traded fund replicating the S\&P500, traded on NYSE.} or fundamental securities in the S\&P500 index). As traders modified 
their positions, the combined sell 
pressure drove the prices of the E-mini 
and SPY down by 3\% in the three minutes 2:41pm--2:44pm ($t=18660 \text{ to } 18840$). This fall created a hot-potato effect, which lead to further falls until 2:45:28pm ($t=18928$), when a five second trading halt was placed on the E-mini contract. Between 2:41pm and the trading halt, the E-mini price had fallen 5\%, while the SPY price had fallen 6\%. These contracts then recovered, in a volatile way, until around 3pm ($t=19800$), when prices stabilised near their original levels. This event caused a 31.7\% increase in the S\&P500 Volatility Index (VIX).

We will consider NYSE Euronext TAQ data for the SPY contract.  We will attempt to use our filter to automatically identify the flash crash, using only the frequency of trades on SPY. The data source we will consider consists only of timestamped trades on SPY, and the times are only at a one-second precision level. We will focus our attention on the day of the crash, and on the data on trades (rather than including quotes). We will consider the number of trades, rather than the traded volume, as this should display self exciting behaviour among high frequency traders more readily.

We plot the trade data (prices and numbers of trades) for this day in  Figure \ref{fig:data}. Kirilenko et al.~\cite{Kirilenko2011} identify  the period 14:32--15:08 ($t=18120\text{ to }20280$) as the period of the crash, this is shown in red in Figure \ref{fig:data}, and a higher-resolution plot for this period is given in Figure \ref{fig:data2}. From the recorded trade prices, we can clearly see the period of the crash, and also the increase in trading over the afternoon of this day. We can also see that the data appears to contain noticeable recording errors, particularly during and following the period of the crash, with some trades being recorded at prices well away from the bulk of the market (or possibly at the wrong times). This is common in high frequency data \cite{Brownlees2006, Falkenberry2002}, however as our method only uses the number of trades, it is reasonably robust against these errors, and we perform no further data cleaning.

\begin{figure}[h]
\begin{center}
 \includegraphics[width=0.9\textwidth]{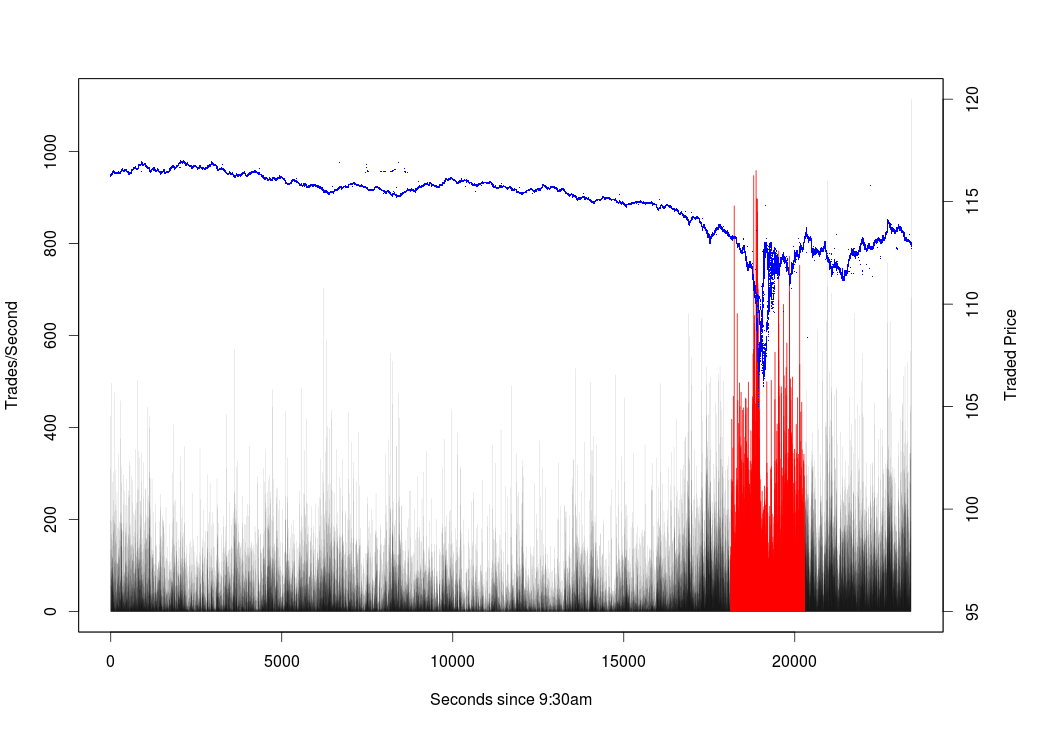}
\caption{NYSE Euronext SPY Trades 6th March 2010 (9:30--16:00), recorded prices (blue) and numbers of trades (black). The trades in the period 14:32--15:08 are shown in red.}
\label{fig:data}
\end{center}
\end{figure}
\begin{figure}[h]
\begin{center}
 \includegraphics[width=0.9\textwidth]{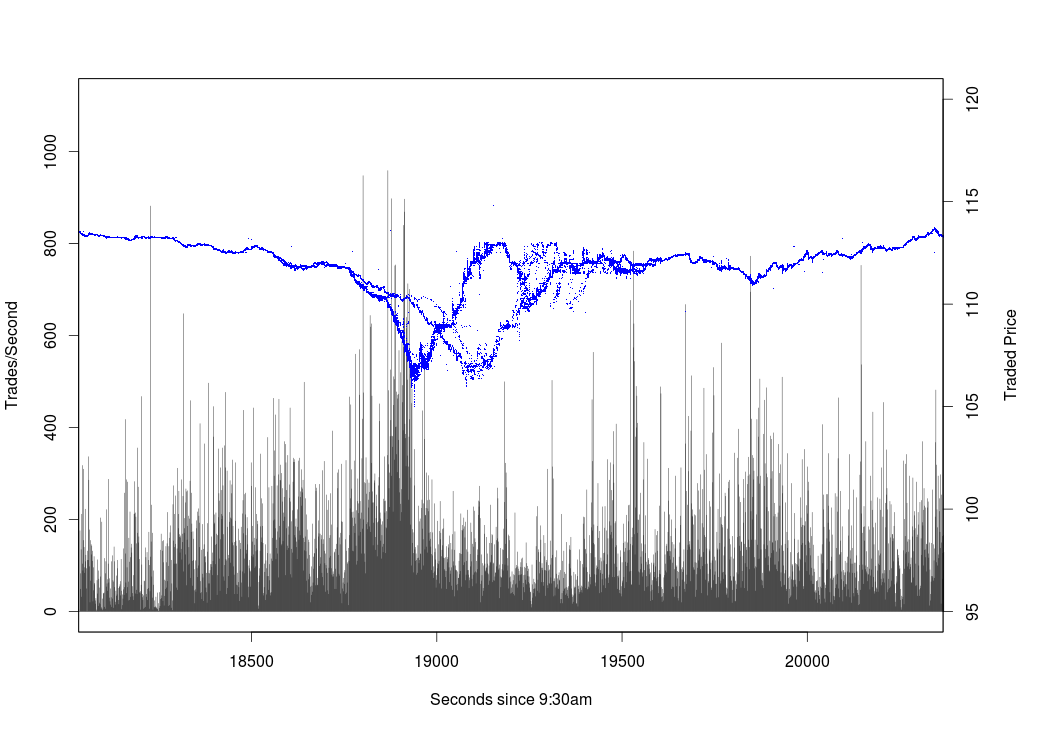}
\caption{NYSE Euronext SPY Trades 6th March 2010 (14:32--15:08), recorded prices (blue) and numbers of trades (black).}
\label{fig:data2}
\end{center}
\end{figure}

We will model the market as being in one of two states, one being a normal state $X=e_1$, the other relating to a high trading period $X=e_2$, which we hope will reflect the crash. 

For a crude approximation, we will use the period from 13:43:20 to 16:00 as the period affected by the crash, while the period from 09:30 to 13:43:20 will be used to calibrate usual trading conditions. (This time is taken as it is 15000 seconds after 9:30am.)  We will calibrate our model using a single pass with this grouping, rather than applying the EM algorithm, as the naive EM algorithm for self-exciting processes is highly sensitive to deviation from the model. This is highlighted by the fact that the numbers of trades per second appear to be significantly overdispersed relative to a Poisson distribution, to an extent which cannot be corrected for by using a self-exciting model on the one-second timescale (which is the smallest resolution of our model). Therefore, as an ad-hoc correction we will also simply divide the number of trades per second by 25, as this leads to a working model. This rescaled process we denote $Y$.

We assume a Markov-modulated-Hawkes' process of the type considered earlier, with an additional polynomial term, which allows slightly more flexibility in the effect of jumps. That is, we consider $Y$ as well approximated by a pure jump process with rate
\[\Big(\langle \alpha, X_t\rangle + \langle \beta, X_t\rangle\int_{[0,t[} e^{-\langle \gamma, X_t\rangle (t-s)} dY_s\Big)^{\langle \zeta, X_t\rangle}.\]
Discretizing the integral on our one-second scale, and using our initial grouping, we obtain maximum likelihood estimates 
\begin{center}
\begin{tabular}{c||c|c|c|c}
State& $\alpha$ & $\beta$ & $\gamma$ & $\zeta$\\
\hline
$X=e_1$& 1.0014241&  1.0416288 &0.9996515 &0.8987939\\
$X=e_2$& 0.5222101&  1.7255265 & 0.5095281&0.6821643\\
\end{tabular}
\end{center}

The rate matrix $A$ of the underlying Markov chain can be chosen as a tuning parameter for our model. Taking $A$ to be of the form
\[A= \epsilon \left[\begin{array}{cc} -1 &1 \\ 1& -1\end{array}\right],\]
we find that $\epsilon = 10^{-7}$ yields good performance, and corresponds to a self-excitation episode every 115 days. These choices of parameters are clearly ad-hoc, and are made to demonstrate the potential performance of the filter. 

The result of the filter and smoother can be seen in Figures \ref{fig:filter1} and \ref{fig:filter2}.

\begin{figure}[ht]
\begin{center}
 \includegraphics[width=0.9\textwidth]{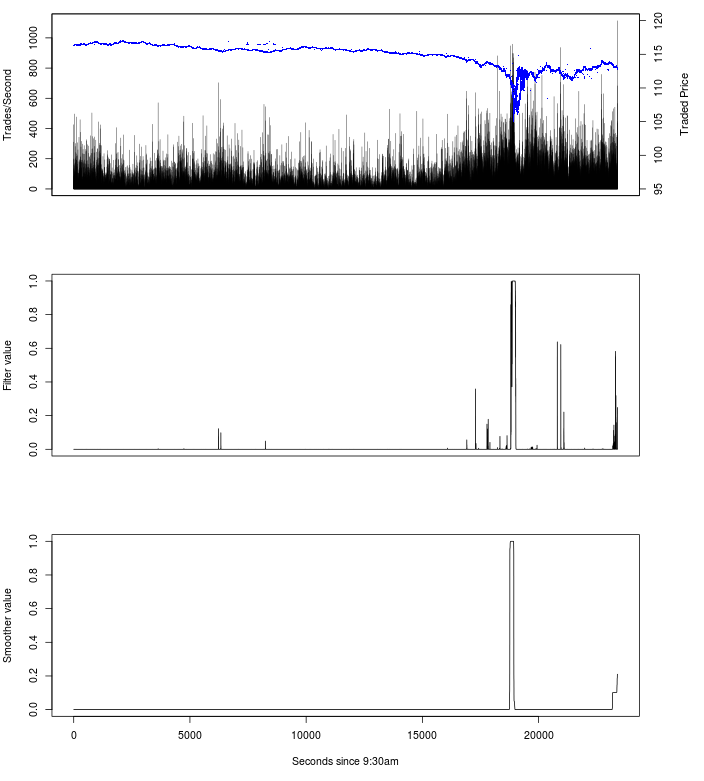}
\caption{SPY Trades and prices 6 May 2010, with filtered and smoothed state estimates.}
\label{fig:filter1}
\end{center}
\end{figure}
\begin{figure}[ht]
\begin{center}
 \includegraphics[width=0.9\textwidth]{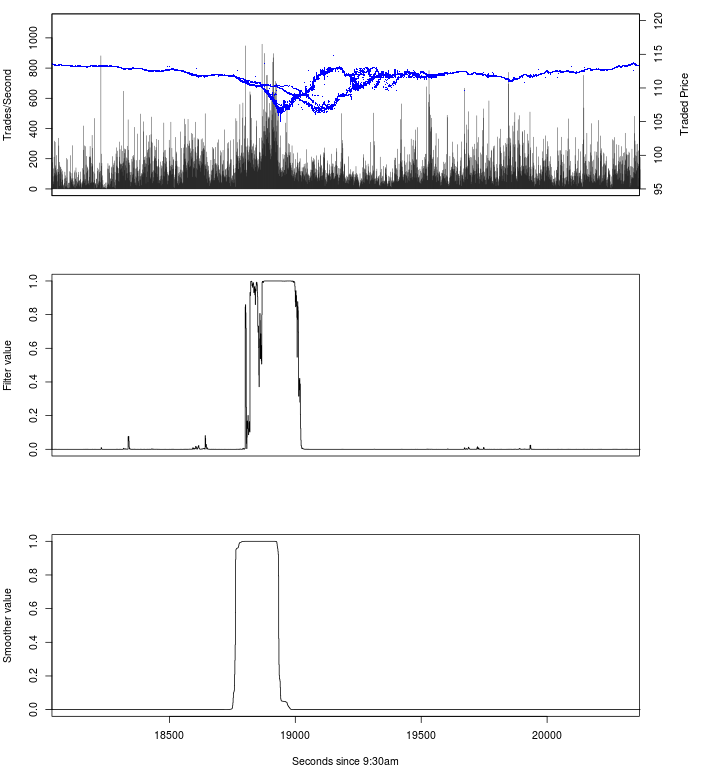}
\caption{SPY Trades and prices 6 May 2010 14:32--15:08, with filtered and smoothed state estimates.}
\label{fig:filter2}
\end{center}
\end{figure}

As can be seen, the smoother does not detect any state changes, apart from the crash itself, during the day of the crash. A few anomalous points are observed where the filter suggests that a state change is possible, however these are all quickly resolved in the following seconds (apart from those at the end of the day). Considering Figure \ref{fig:filter2} more closely, we see that, as expected, the filter detects the crash slightly after the smoother, but still before the precipitous decline in the price. The filter reverts to the usual state half-way through the price recovery. The smoother succeeds in detecting the crash, from the beginning of the price decline, and reverts to the usual state at the beginning of the price recovery.  Recall that this calculation is done without observation of the price, as the filter and smoother are based purely on the number of trades. 

\section{Conclusion}

We see that these filtering methods have significant potential to detect anomalous trading behaviour, allowing for the presence of self-excitation in both the usual and anomalous states. A key advantage of this methodology is that it is highly flexible in the choice of self-excitation structure, and, as the filter and smoother equations are available in a closed form, can be implemented at very high speed. This is of particular importance given the increase in automatic high-frequency trading, and the potential problems that this can create. 

Future developments of this work will necessitate the construction of more realistic and flexible models for the trade data. In particular, our ad-hoc correction for overdispersion (dividing the number of trades by 25) is unsatisfying, and development of multilevel self-exciting models, which can directly manage the `spiky' nature of the data set, would be preferable. The development of robust statistical methods is also signficant, particularly given the use of the EM algorithm and the errors common to high-frequency data.

\bibliographystyle{plain}
\bibliography{../RiskPapers/General}
\end{document}